\newtheorem{theorem}{Theorem}
\newtheorem{remark}{Remark}
\newtheorem{lemma}{Lemma}
\newcommand{\mb}{\mathbf}
\newcommand{\mc}{\mathcal}
\title{On the Limited Communication Analysis and Design \\ for Decentralized Estimation\vspace{-0.3cm}}
\author{Andreea B. Alexandru $^{\dag}$ $\quad$ S\'ergio Pequito $^{\dag}$ $\quad$ Ali Jadbabaie  $^{\ddagger}$ $\quad$ George J. Pappas $^{\dag}$
 \thanks{This work was supported in part by the TerraSwarm Research Center, one of six centers supported by the STARnet phase of the Focus Center Research Program (FCRP) a Semiconductor Research Corporation program sponsored by MARCO and DARPA.}
\thanks{
$^{\dag}$Department of Electrical and Systems Engineering, School of Engineering and Applied Science,
University of Pennsylvania
\newline \indent
$^{\ddagger}$Department of Civil and Environmental Engineering, and Institute for Data, Systems, and Society, Massachusetts Institute of Technology}\vspace{-0.4cm}
}
\date{} 
\begin{document}

\maketitle

\thispagestyle{empty}
\pagestyle{empty}

\begin{abstract}
This paper pertains to the analysis and design of decentralized estimation schemes that make use of limited communication. Briefly, these schemes equip the sensors with scalar states that iteratively merge the measurements and the state of other sensors to be used for state estimation. Contrarily to commonly used distributed estimation schemes, the only information being exchanged are scalars, there is only one common time-scale for communication and estimation, and the retrieval of the state of the system and sensors is achieved in finite-time. We extend previous work to a more general setup and provide necessary and sufficient conditions required for the communication between the sensors that enable the use of limited communication decentralized estimation~schemes. Additionally, we discuss the cases where the sensors are memoryless, and where the sensors might not have the capacity to discern the contributions of other sensors. Based on these conditions and the fact that communication channels incur a cost, we cast the problem of finding the minimum cost communication graph that enables limited communication decentralized estimation schemes as an integer programming problem.
\end{abstract}

\vspace{-0.1cm}
\section{Introduction}

Sensors are often geographically deployed to collect measurements over large-scale networked dynamical systems, which are used by state estimators (implementing state observers) to retrieve an estimate of the overall state of the system. Then, the estimate is provided to the actuator that implements a controller to steer the dynamical system to the desired state~\cite{guptaPhd,GuptaStateEstPowerSys,karPhd,khanPhd}. Estimators can be full-state, reduced and extended state observers that implicitly explore the trade-offs between communication and estimation~\cite{wang2017distributed,mitra2016distributed}. 
In the context of distributed estimation, additional information used by the state estimators is shared to improve the quality of the estimate. For example, they can share either the estimate, the error between predicted state observation and the measurement, or the innovation used as part of the state estimation process~\cite{Garin2010,DasMoura,khanPhd,Subbotin}. 
The information is shared by resorting to communication between different sensors' computational units, and the communication capabilities impact the ability to retrieve the estimate of the state. Therefore, it is fundamental to understand which communication is required to ensure a successful recovery of the system state~\cite{KhanJadbabaie,Khan10}. 

Most of the observers implemented in large-scale systems require a large amount of information being exchanged through communication (i.e., the state, the error between measurement and predicted state observation). 
Furthermore, the estimators are shown to be asymptotically stable (not always with an arbitrary error decay) which might restrict the actuation performance in the context of large-scale networked dynamical systems. To overcome such limitations, in~\cite{AlexandruCDC16}, we proposed an approach that equips the sensors with scalar states which are exchanged with other sensors, and together with sensors measurements, suffice to retrieve in finite-time the state of the networked dynamical system and those of the sensors, which we refer to as \emph{limited communication decentralized estimation scheme}. Subsequently, the information being exchanged between sensors is reduced to the bare minimum, and the communication topologies analyzed are designed to ensure sensor-network state recovery.

In this paper, we seek to better understand the restrictions and trade-offs of the limited communication decentralized estimation scheme. Specifically, the main contributions of this paper are: (\emph{i}) we waive some implicit assumptions made in~\cite{AlexandruCDC16} about the communication scheme performed by the sensors (which are in general only sufficient, as we emphasize in Remark~\ref{re:particularization}); (\emph{ii}) we explore the implications in two different setups: (\emph{a}) the sensors are \emph{memoryless} (i.e., they do not keep track of their previous state); and (\emph{b}) sensors might not have the capacity to discern the contributions and/or state of other sensors (e.g., those relying on radio technology); and (\emph{iii})~we leverage these new conditions to cast the problem of determining the minimum communication cost required to deploy a limited communication decentralized estimation scheme as an integer programming problem.

%
%

\vspace{-0.1cm}
\section{Problem Statement}

Let the evolution of a (possibly) large-scale networked dynamical system be captured by
\begin{equation}
\mb x[k+1]=\mb A\mb x[k], \quad k=0,1,\ldots
\label{dynamics}
\end{equation}
where $\mb x[k]\in\mathbb{R}^{n\times 1}$ is the state of the system. Consider $m$~sensors with measurements $y_i\in\mathbb{R}$ described as follows:
\begin{equation}
y_i[k]=\mb c_i^\intercal \mb x[k], \quad i=1,\ldots,m,
\label{output}
\end{equation}
where $\mb c_i\in\mathbb{R}^{n\times 1}$ is the output vector describing the contributions of the different observed state variables. We assume that $(\mb A,\mb C=[\mb c_1^\intercal\ \mb c_2^\intercal \ \ldots \ \mb c_m^\intercal]^\intercal)$ is observable, but not necessarily observable from a specific sensor $i$, i.e., $(\mb A, \mb c_i^\intercal)$ is not necessarily observable. 

In the limited communication decentralized estimation scheme, we consider that the sensors possess a scalar state and can communicate with each other. During this process, they share their states, which enables the retrieval of the state of both the networked dynamical system and the sensors. The communication capabilities are captured by a directed \emph{communication graph} $\mc G=(\mc V,\mc E)$, where the set of vertices $\mc V=\{1,\ldots,m\}$ labels the $m$ sensors, and an edge $(i,j)\in\mc E$ translates in the capability of sensor $j$ to receive data from sensor $i$. Besides, each sensor $i$ computes a linear combination of the (scalar) measurement and the scalar data $z_j\in \mathbb{R}$ received from the neighboring sensors, i.e., $j\in \mathbb{N}_i^-$, which can be described as follows:\vspace{-0.1cm}
\begin{equation}
z_i[k+1]=y_i[k]+\sum_{j\in\mathbb N_i^-}w_{ij}z_j[k], \ i\in\mc{V},
\label{dynSensors}
\end{equation}\vspace{-0.4cm}

\noindent where $\mathbb N_i^-=\{j\in\mc{V}: (i,j)\in \mc E\}$ are the indices of the \mbox{\mbox{in-neighbors}} of sensor $i$ given by the communication graph~$\mc G$.
Subsequently, we can write~\eqref{dynamics}-\eqref{dynSensors} using the following compact representation:\vspace{-0.1cm}
\begin{equation}
\tilde{\mb x}[k+1]=\left[\begin{array}{cc}\mb A & \mb{0}_{n\times m}\\ \mb C & \mb W(\mc G) \end{array}\right]\tilde{\mb x}[k] =: \tilde{\mb A}(\mc G) \tilde{\mb x}[k],
\label{dynAugmented}
\end{equation}\vspace{-0.4cm}

\noindent where $\tilde{\mb x}=[x_1\ \ldots\ x_n \ z_1 \ldots z_m]^\intercal$ is the augmented system's state and $\mb W(\mc G)$ the dynamics between sensors induced by the communication graph, i.e., $[\mb W(\mc G)]_{ij}=w_{ij}$ when $(i,j)\in \mc E$ and zero otherwise. It is worth noticing that some of the weights $w_{ij}$ may be set to zero, and, in particular, if $w_{ii}=0$ for $i\in\{1,\ldots,m\}$ then we are dealing with \emph{memoryless} sensors that work as relays -- which cannot be addressed by the setup explored in~\cite{AlexandruCDC16}.
Additionally, the augmented system's output is described as follows:\vspace{-0.1cm}
\begin{equation}
\tilde y_i[k]=\left[\begin{array}{cc} -\ \mb c_i^\intercal \ - & \mb{0}_{1\times m}\\
\mb{0}_{|\mc J_i|\times n} &\mb{I}_{m}^{\mc J_i}\end{array}\right] \tilde{\mb x}[k] =: \tilde{\mb C}_i \tilde{\mb x}[k], \ i\in\mc V,
\label{outputAugmented}
\end{equation}\vspace{-0.4cm}

\noindent where $\mb{I}_{m}^{\mc J_i}$ is the sub-matrix containing the rows of the $m\times m$ identity matrix with indices in $\mc J_i\subset\{1,\ldots,m\}$. In particular, $\mathcal J_i=\mathbb{N}_i^-$ when the linear combination of incoming sensor's states is performed (locally) at sensor $i$, or $\mathcal J_i=\{i\}$ when sensors do not have the capacity to discern the contributions and/or state of other sensors (e.g., those relying on radio technology). The latter case cannot be addressed by the setup explored in~\cite{AlexandruCDC16}.

In this paper, we seek solutions to the following problems:

\emph{Problem~1:} Characterize the necessary and sufficient conditions that must be satisfied by $\mc G$ (and, subsequently, by $\mb W(\mc G)$) ensuring that $({\tilde{\mathbf{A}}}(\mc G),{\tilde{\mathbf{C}}}_i)$ is observable.

In particular, we provide the characterization required in the memoryless sensor scenario, and in the case where a sensor only has access to its own state. 
Next, we propose to determine communication topologies that ensure the necessary and sufficient conditions required to solve the previous problem, while minimizing the communication cost between the different sensors:

\emph{Problem~2:} Let $\mb \Omega_{e}\in \mathbb{R}^+_{\geq0}\cup\{\infty\}$ be the communication cost incurred by establishing a communication link $e = (i,j)\in \mathcal E$ between the sensors $i,j\in\mathcal V$ to obtain a communication graph $\mathcal G=(\mathcal V,\mathcal E)$. We aim to determine $\mathcal E$ that solves the following optimization problem:\vspace{-0.1cm}
\[\min\limits_{\mathcal E}~ \sum_{e\in \mathcal E} \mb \Omega_{e}~
\text{s.t.}~\mathcal G \text{ satisfies conditions from \emph{Problem~1}.}
\]

\section{Terminology and Previous Results}\label{sec:termPR} 

In what follows, we rely on \emph{structural systems} theory~\cite{dionSurvey} to assess system theoretical properties by considering only the inter-dependencies between states and sensors.
One such system property is that of \emph{structural observability} that considers the sparsity binary patterns $(\bar{\mb A},\bar{\mb C})$, where an entry in these matrices is zero if there is no direct dependency between two (state or sensor) variables and one otherwise~\cite{Lin_1974,Shields_Pearson:1976}. A pair $(\bar{\mb A},\bar{\mb C})$ is structurally observable if there exists an observable pair $(\mb A,\mb C)$ such that the zero entries in $(\bar{\mb A},\bar{\mb C})$ are also zero in $(\mb A,\mb C)$. Subsequently, it can be proved that if such an observable pair exists, then almost all possible pairs satisfying the sparsity pattern are also observable. Furthermore, structural properties (e.g., structural observability) are necessary to ensure non-structural properties (observability). Therefore, in Section~\ref{sec:main}, we rely on structural systems to ensure first the necessary conditions, and then we show that in fact these are also sufficient.

One of the key features of structural systems theory is that we can interpret the sparsity patterns $(\bar{\mb A},\bar{\mb C})$ as a directed state graph $\mathcal D(\bar{\mb A})\equiv(\mathcal X,\mathcal E_{\mathcal X,\mathcal X})$ and \emph{\mbox{state-output} graph} $\mathcal D(\bar{\mb A},\bar{\mb C}) \equiv(\mathcal X\cup\mathcal Y,\mathcal E_{\mathcal X,\mathcal X}\cup \mathcal E_{\mathcal X,\mathcal Y})$, where the vertices are labeled by the states and sensors and the edges capture the inter-dependencies between state and sensor variables as follows: $\mathcal E_{\mathcal X,\mathcal X}=\{(x_i,x_j): [\bar{\mb A}]_{ji}\neq 0\}$ and $\mathcal E_{\mathcal X,\mathcal Y}=\{(x_i,y_j): [\bar{\mb C}]_{ji}\neq 0\}$. We will use $\mc E_{\mc X,\mc X\cup \mc Y}:=\mathcal E_{\mathcal X,\mathcal X}\cup \mathcal E_{\mathcal X,\mathcal Y}$ for brevity.
Additionally, we can use graph-theoretical notions, e.g., paths and cycles, to address the structural properties. In particular, to characterize structural observability, we introduce the notion of bipartite graph associated with the state graph and \mbox{state-output} graph.~
The \emph{state bipartite graph} $\mathcal B(\bar{\mb A})\equiv\mc{B}(\mc{X},\mc{X},\mc{E_{X,X}})$ (resp., the \emph{\mbox{state-output} bipartite graph} $\mathcal B(\bar{\mb A},\bar{\mb C})\equiv\mc{B}(\mc{X},\mc{X}\cup \mc{Y},\mc{E_{X,X\cup Y}})$), consists of two sets $\mc{X}$ (resp., $\mc{X}$ and $\mc{X}\cup \mc{Y}$) that can be graphically interpreted and to which we refer to as left and right set of vertices. Edges between the left and right set of vertices encode the dependencies described by the edge-set of the directed state graph (resp., directed \mbox{state-output} graph). 
Also, due to the correspondence between these edges, paths and cycles in the \mbox{state-output} graph can be captured by subsets of \mbox{vertex-disjoint} edges in the state and \mbox{state-output} bipartite graph, which are referred to as \emph{matchings}, and the subset with the largest number of edges referred to as \emph{maximum matching}. Consequently, those left (resp., right) vertices in the state and \mbox{state-output} bipartite graph that do not belong to any edge in the matching are referred to as \emph{\mbox{left-unmatched}} (resp., \emph{\mbox{right-unmatched}}) vertices. Accordingly, we have the following result:

\begin{lemma}[\cite{PequitoJournal}]\label{maxMatPathCycle}
Consider the digraph $\mc D(\bar{\mb A},\bar{\mb C})\equiv(\mc X\cup\mc Y,\mc E_{\mc X,\mc X\cup\mc Y})$ and let $M^{\ast}$ be a maximum matching associated to the \mbox{state-output} bipartite graph $\mc B(\bar{\mb A},\bar{\mb C})\equiv \mc B(\mc X,\mc X\cup\mc Y,\mc E_{\mc X,\mc X\cup\mc Y})$. Then, the digraph $\mc D\equiv(\mc X\cup\mc Y, M^{\ast})$ comprises a disjoint union of cycles and elementary paths, from the left-unmatched vertices to the \mbox{right-unmatched} vertices of $M^{\ast}$, that span $\mc D(\bar{\mb A},\bar{\mb C})$. Moreover, such a decomposition is \emph{minimal}, in the sense that no other spanning subgraph decomposition of $\mc D(\bar{\mb A},\bar{\mb C})$ into elementary paths and cycles contains strictly fewer elementary paths.
\hfill $\diamond$
\end{lemma}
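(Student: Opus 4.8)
The plan is to use the classical dictionary between matchings of $\mc B(\bar{\mb A},\bar{\mb C})$ and spanning subdigraphs of $\mc D(\bar{\mb A},\bar{\mb C})$ in which every vertex has in-degree and out-degree at most one. First I would observe that each edge of the matching $M^{\ast}$ joins a left vertex $x_i\in\mc X$ to a right vertex $v\in\mc X\cup\mc Y$ and that, by construction of $\mc B(\bar{\mb A},\bar{\mb C})$, the arc $x_i\to v$ is an arc of $\mc D(\bar{\mb A},\bar{\mb C})$; hence $\mc D\equiv(\mc X\cup\mc Y,M^{\ast})$ is a spanning subdigraph of $\mc D(\bar{\mb A},\bar{\mb C})$. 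Since no vertex is incident to more than one edge of $M^{\ast}$ on either side, every vertex of $\mc D$ has in-degree at most one and out-degree at most one, and every output vertex — which is never a left vertex of $\mc B(\bar{\mb A},\bar{\mb C})$ — has out-degree exactly zero. A digraph with all in- and out-degrees at most one is a vertex-disjoint union of elementary paths and elementary cycles; and because a cycle forces in-degree and out-degree equal to one at each of its vertices, no output vertex can sit on a cycle, so every cycle is contained in $\mc X$. This already gives the claimed decomposition and the spanning property.

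Next I would pin down the path endpoints and count the paths. The initial vertex of a path has in-degree zero, i.e., it is not covered by $M^{\ast}$ on the right (a right-unmatched vertex), while the terminal vertex has out-degree zero, i.e., it is either a state vertex not covered by $M^{\ast}$ on the left (a left-unmatched vertex) or an output vertex; an isolated vertex is a degenerate path whose single vertex is simultaneously right-unmatched and either left-unmatched or an output. Each right-unmatched vertex starts exactly one path and, conversely, every path starts at a right-unmatched vertex, so the number of elementary paths equals the number of right-unmatched vertices, namely $|\mc X\cup\mc Y|-|M^{\ast}|=n+m-|M^{\ast}|$; counting terminal vertices instead — the $n-|M^{\ast}|$ left-unmatched states together with the $m$ output vertices, each terminating exactly one path — returns the same value.

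For the minimality claim I would reverse the dictionary. Let $H$ be any spanning subdigraph of $\mc D(\bar{\mb A},\bar{\mb C})$ that is a vertex-disjoint union of elementary paths and cycles; then every vertex of $H$ has in- and out-degree at most one, so reading each arc $x_i\to v$ of $H$ as the bipartite edge between left $x_i$ and right $v$ yields a matching $M$ of $\mc B(\bar{\mb A},\bar{\mb C})$, and the same endpoint count shows that $H$ contains exactly $n+m-|M|$ elementary paths. Since $M^{\ast}$ is a maximum matching, $|M|\leq|M^{\ast}|$, hence $n+m-|M|\geq n+m-|M^{\ast}|$; therefore the decomposition induced by $M^{\ast}$ has the fewest elementary paths among all spanning path–cycle decompositions of $\mc D(\bar{\mb A},\bar{\mb C})$, which is the asserted minimality.

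I expect the substantive part to be this reverse direction together with making the path count tight — i.e., verifying that distinct matching edges contribute distinct in-degree-one (equivalently out-degree-one) vertices, which is exactly what ties a maximum matching to a minimum-path decomposition. The remaining work is bookkeeping: attributing each path endpoint to the correct (left- or right-) unmatched class under the orientation inherited from $\mc D(\bar{\mb A},\bar{\mb C})$, and carefully handling degenerate single-vertex paths and the output vertices, which are pure sinks of $\mc D(\bar{\mb A},\bar{\mb C})$ and hence can only terminate a path and never lie on a cycle.
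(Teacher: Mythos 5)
The paper does not prove this lemma: it is imported verbatim from \cite{PequitoJournal}, so there is no in-paper argument to compare against. Your proof is the standard one and is correct. The core dictionary --- matchings of $\mc B(\bar{\mb A},\bar{\mb C})$ are exactly the spanning subdigraphs of $\mc D(\bar{\mb A},\bar{\mb C})$ in which every vertex has in- and out-degree at most one, such subdigraphs decompose into vertex-disjoint elementary paths and cycles, and the number of paths equals the number of in-degree-zero vertices, i.e., $n+m-|M|$ --- immediately gives both the decomposition and, by running the dictionary backwards on an arbitrary spanning path--cycle decomposition, the minimality claim, since $|M|\le|M^{\ast}|$. The only point to tidy is the orientation of the path endpoints: under the paper's convention (left vertices of $\mc B(\bar{\mb A},\bar{\mb C})$ are arc tails, right vertices are arc heads), the paths run from the right-unmatched vertices (in-degree zero) to the left-unmatched vertices or to output vertices (out-degree zero), which is the reverse of the phrase ``from the left-unmatched vertices to the right-unmatched vertices'' in the statement. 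You derive the orientation consistently with the stated edge convention and correctly flag the bookkeeping, so the mismatch lies in the statement's wording (or in a transposed bipartite convention in the cited source), not in your argument; your handling of degenerate single-vertex paths and of output vertices as pure sinks excluded from cycles is also correct.
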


The different graph-theoretic concepts can come together to assess structural observability of $(\bar{\mb A},\bar{\mb C})$ as follows.

\begin{theorem}[\cite{AlexandruCDC16}]\label{thm:structObs}
Let $\mc{D}(\bar{\mb A},\bar{\mb C}) \equiv (\mc{X}\cup\mc{Y},\mc E_{\mc X,\mc X\cup\mc Y})$ denote the \mbox{state-output} digraph and $\mc{B}(\bar{\mb A}, \bar{\mb C})$ the \mbox{state-output} bipartite representation. The pair $(\bar{\mb A},\bar{\mb C})$ is structurally observable if and only if the following two conditions hold:
\begin{enumerate}
\item[(i)] there is a path from every state vertex to an output vertex in $\mc D(\bar{\mb A}, \bar{\mb C})$; and
\item[(ii)] there exists a maximum matching $M^{\ast}$ associated to $\mc{B}(\bar{\mb A}, \bar{\mb C})$ such that the \mbox{left-unmatched} vertices $\mc U_L (M^{\ast}) = \emptyset$. \hfill $\diamond$
\end{enumerate}
\end{theorem}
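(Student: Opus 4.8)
The plan is to prove the two implications separately, using the PBH observability test --- $(\mb A,\mb C)$ is observable iff $\operatorname{rank}\begin{bmatrix}\mb A-\lambda\mb I\\ \mb C\end{bmatrix}=n$ for every $\lambda\in\mathbb C$ --- together with the classical fact that the generic rank of a structured matrix equals the size of a maximum matching in its associated bipartite graph. An alternative and arguably cleaner route is to reduce the whole statement to the known characterization of structural controllability by duality, since $(\bar{\mb A},\bar{\mb C})$ is structurally observable if and only if $(\bar{\mb A}^\intercal,\bar{\mb C}^\intercal)$ is structurally controllable; one then only has to check that the controllability conditions (input-reachability and absence of dilations) translate, under transposition, into the output-reachability condition (i) and the matching condition (ii), where Lemma~\ref{maxMatPathCycle} is what lets one pass between the matching and the path/cycle descriptions. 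I sketch the direct argument below.

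\emph{Necessity.} Suppose (i) fails and let $S\subseteq\mc X$ be the nonempty set of state vertices from which no output vertex is reachable in $\mc D(\bar{\mb A},\bar{\mb C})$. Then every out-neighbour of a vertex of $S$ again lies in $S$, and no vertex of $S$ has an edge into $\mc Y$; hence in every numerical realization the coordinate subspace indexed by $S$ is $\mb A$-invariant and contained in $\ker\mb C$, so it lies in the unobservable subspace and $(\mb A,\mb C)$ is never observable. Now suppose (ii) fails, i.e. every maximum matching of $\mc B(\bar{\mb A},\bar{\mb C})$ leaves some left vertex unmatched. Since the edges of $\mc B(\bar{\mb A},\bar{\mb C})$ are exactly the nonzero pattern of $\begin{bmatrix}\bar{\mb A}\\ \bar{\mb C}\end{bmatrix}$ with the left vertices identified with its columns, the generic column rank of $\begin{bmatrix}\bar{\mb A}\\ \bar{\mb C}\end{bmatrix}$ is strictly less than $n$, so for \emph{every} realization $\begin{bmatrix}\mb A\\ \mb C\end{bmatrix}$ has a nontrivial kernel and the PBH test fails at $\lambda=0$. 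This shows both conditions are necessary.

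\emph{Sufficiency.} First I would record the standard genericity reduction: each $n\times n$ minor of the observability matrix $[\mb C^\intercal\ (\mb C\mb A)^\intercal\ \cdots\ (\mb C\mb A^{n-1})^\intercal]^\intercal$ is a polynomial in the free parameters of $(\bar{\mb A},\bar{\mb C})$, so either all these polynomials vanish identically --- in which case no realization is observable --- or the non-observable realizations form a proper algebraic variety and hence almost every realization is observable. It therefore suffices to exhibit one observable realization under (i) and (ii). To that end, take a maximum matching $M^\ast$ with $\mc U_L(M^\ast)=\emptyset$ and invoke Lemma~\ref{maxMatPathCycle}: $M^\ast$ decomposes $\mc D(\bar{\mb A},\bar{\mb C})$ into vertex-disjoint elementary paths ending at output vertices and disjoint cycles, spanning all of $\mc X$. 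Assigning suitably generic nonzero weights to the edges of $M^\ast$ already makes $\begin{bmatrix}\mb A\\ \mb C\end{bmatrix}$ of full column rank, which settles PBH at $\lambda=0$ and, by a determinantal argument, at all but finitely many $\lambda$. The remaining obstruction is the finitely many nonzero $\lambda$ at which the cycles of $M^\ast$, in isolation, generate $\mb A$-invariant eigenspaces inside $\ker\mb C$; here condition (i) is used, since it guarantees that every such cycle is joined by a directed path to an output through edges of $\mc D(\bar{\mb A},\bar{\mb C})$ outside $M^\ast$. Adding these connecting edges with small generic weights and applying a continuity/perturbation (or direct PBH) argument removes the spurious modes while preserving the full-rank property secured at $\lambda=0$, yielding an observable realization.

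\emph{Main obstacle.} I expect this last step --- certifying that the connectivity edges promised by (i), once grafted onto the matching-based skeleton, eliminate the unobservable modes at every nonzero $\lambda$ without destroying the rank condition already in hand --- to be the crux. The cleanest way to discharge it is to appeal to the dual of Lin's structural controllability theorem (equivalently, the "cactus" characterization) rather than redo the eigenvector bookkeeping by hand, after verifying that the dualized conditions coincide exactly with condition (i) and with the matching reformulation (ii) furnished by Lemma~\ref{maxMatPathCycle}.
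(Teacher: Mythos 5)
This theorem is imported verbatim from the cited reference and the paper contains no proof of it, so there is no internal argument to compare yours against; it is the classical characterization of structural observability, i.e., the dual of Lin's structural controllability theorem (output-reachability plus absence of dilations, the latter recast as the matching condition via the generic-rank interpretation of $\mc B(\bar{\mb A},\bar{\mb C})$). Your necessity argument is correct and essentially complete: the set $S$ of output-unreachable state vertices spans an $\mb A$-invariant subspace contained in $\ker\mb C$ in every realization, and failure of (ii) caps the generic --- hence every --- column rank of $\left[\begin{smallmatrix}\mb A\\ \mb C\end{smallmatrix}\right]$ strictly below $n$, so PBH fails at $\lambda=0$. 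The sufficiency direction has the right skeleton (genericity of observability, then exhibiting one observable realization from the path/cycle decomposition of Lemma~\ref{maxMatPathCycle}), but the final step is not a proof as written: ``adding the connecting edges with small generic weights and applying a continuity/perturbation argument'' only works in one direction, since lower semicontinuity of rank preserves the full-rank PBH condition you already have at $\lambda=0$ and at generic $\lambda$, but does not by itself show that the perturbation \emph{raises} the rank at the finitely many bad eigenvalues contributed by the isolated cycles. That is precisely where the content of the theorem lives, and you correctly identify the standard way to discharge it: either dualize Lin's theorem / invoke the cactus characterization, or argue for each bad mode separately that some realization makes the relevant $n\times n$ minor of the PBH matrix nonzero, so that the union of the finitely many failure varieties remains proper. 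With that step filled in by the duality you propose, the argument is sound; as submitted it is an accurate outline of the standard proof with one acknowledged but essential gap in sufficiency.
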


Therefore, as previously mentioned, we can build upon these results to analyze and design the limited communication decentralized estimation scheme. In~\cite{AlexandruCDC16}, we have provided necessary and sufficient conditions that $\mathcal G$ needs to satisfy to ensure observability of $(\tilde{\mb A}(\mc G), \tilde{\mb C}_i)$ under the following simplifying implicit assumption.

\textit{Implicit Assumption~\cite{AlexandruCDC16}:} Each sensor retains its previous state that is always weighted in the sensor dynamics~\eqref{dynSensors}, i.e., $w_{ii}\neq 0$, which implies that in~\eqref{dynAugmented} we have $[\text{diag}(\mb W(\mc G))]_{ii}\neq 0$ for all $i=1,\ldots, m$. \hfill $\circ$

In other words,~\cite{AlexandruCDC16} excludes the case of memoryless sensors, which we address in this paper. We also explore other setups, 
e.g., when the sensors are not able to differentiate the individual contributions of other sensors due to the technology used. Next, we state two of the main results in~\cite{AlexandruCDC16} for ease of comparison with the main results attained~in this work, where we waive the implicit assumption stated above.

\begin{theorem}[\cite{AlexandruCDC16}]\label{thm:decstructobs}
Let $\mc D(\tilde{\mb A}(\mc G))\equiv(\mc V\equiv(\mc X\cup \mc Z),\mc E_{\mc V,\mc V})$ be the state digraph, where $\mc X$ corresponds to the labels of the state vertices and $\mc Z$ to the labels of the sensors' states. In addition, let $\mc N_{i}^{-}=\{v\in\mc V: (v,z_i)\in\mc E\}$ be the set of \mbox{in-neighbors} of a vertex $z_i$ representing a sensor in $\mc D(\tilde{\mb A}(\mc G))$, $i = 1,\ldots,m$. The following two conditions are necessary and sufficient to ensure that~$(\tilde{\mb A}(\mc G),\tilde{\mb C}_i)$, for $i=1,\ldots,m$, is generically observable:
\begin{enumerate}
\item[(i)] for every $z\in \mc Z$ there must exist a directed path from any $v\in\mc V$;
\item[(ii)] for every $z\in \mc Z$ there must exist a set of \mbox{left-unmatched} vertices $\mc U_L$, associated with a maximum matching of the bipartite representation of $\mc D(\tilde{\mb A}(\mc G))$, such that $\mc U_L\subset \mc N_i^-$ and $\mc U_L\cap \mc X = \emptyset$.~\hfill~$\diamond$
\end{enumerate}
\end{theorem}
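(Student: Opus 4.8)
\noindent\emph{Proof strategy.}
The plan is to obtain both directions from the structural observability test of Theorem~\ref{thm:structObs}, applied to the augmented pair $(\tilde{\mb A}(\mc G),\tilde{\mb C}_i)$ for each $i$, after reducing generic observability to the structural level. Since structural properties are necessary for their non-structural counterparts, generic observability of $(\tilde{\mb A}(\mc G),\tilde{\mb C}_i)$ forces $(\overline{\tilde{\mb A}(\mc G)},\overline{\tilde{\mb C}_i})$ to be structurally observable, so for necessity it suffices to rewrite conditions~(i)--(ii) of Theorem~\ref{thm:structObs}, read off $\mc D(\tilde{\mb A}(\mc G),\tilde{\mb C}_i)$, as conditions~(i)--(ii) of the statement. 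For sufficiency one must additionally check that, with the \emph{prescribed} observable pair $(\mb A,\mb C)$ and only the weights $\{w_{ij}:(i,j)\in\mc E\}$ free, structural observability of the augmented pair yields observability for a generic weight assignment; this is addressed at the end. Throughout I use that $\tilde{\mb A}(\mc G)$ in~\eqref{dynAugmented} is block lower-triangular, so $\mathrm{spec}(\tilde{\mb A}(\mc G))=\mathrm{spec}(\mb A)\cup\mathrm{spec}(\mb W(\mc G))$ and no $z\in\mc Z$ reaches any $x\in\mc X$, and that the sensor block is driven by the full output $\mb C\mb x$ whereas $\tilde{\mb C}_i$ in~\eqref{outputAugmented} reads only $y_i=\mb c_i^\intercal\mb x$ and the sensor states $z_j$ with $j\in\mc J_i=\mathbb N_i^-$.

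Translating condition~(i) of Theorem~\ref{thm:structObs} --- that every state vertex of the augmented system reaches an output vertex of $\mc D(\tilde{\mb A}(\mc G),\tilde{\mb C}_i)$, for every $i$ --- proceeds as follows. The output vertices of $\tilde{\mb C}_i$ are the $y_i$-readout, adjacent to the state vertices $x_j$ with $[\mb c_i]_j\neq0$, and $|\mc J_i|$ sensor readouts, each adjacent to a single sensor vertex $z_j$ with $j\in\mathbb N_i^-$. By the triangular structure a vertex $z\in\mc Z$ can reach an output of $\tilde{\mb C}_i$ only through the $\mc Z$-subgraph, i.e.\ by reaching some $z_j$, $j\in\mathbb N_i^-$, equivalently reaching $z_i$. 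Since $(\mb A,\mb C)$ is observable, Theorem~\ref{thm:structObs}(i) applied to $(\bar{\mb A},\bar{\mb C})$ shows that every $x\in\mc X$ already reaches some $z\in\mc Z$ along $\mb C$-edges regardless of $\mc G$; hence, once every sensor vertex reaches every other sensor vertex, the $x$-to-output requirement is automatic. Conjoining the requirement over all $i$ therefore collapses to ``every $v\in\mc V$ reaches every $z\in\mc Z$ in $\mc D(\tilde{\mb A}(\mc G))$'', which is exactly condition~(i) of the statement.

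For condition~(ii) I invoke Lemma~\ref{maxMatPathCycle} to read a maximum matching of $\mc B(\tilde{\mb A}(\mc G),\tilde{\mb C}_i)$ as a minimal path/cycle cover of $\mc D(\tilde{\mb A}(\mc G))$ and track which right vertices are output vertices. The single row $\mb c_i^\intercal$ can cover at most one left-unmatched state vertex, whereas the $|\mc J_i|$ identity rows $\mb I_m^{\mc J_i}$ can cover only the sensor vertices $z_j$ with $j\in\mathbb N_i^-$. Because $(\mb A,\mb C)$ is observable, Theorem~\ref{thm:structObs}(ii) for $(\bar{\mb A},\bar{\mb C})$ provides a matching covering all of $\mc X$ using $\mb A$- and $\mb C$-edges, so $\mc B(\tilde{\mb A}(\mc G))$ always admits a maximum matching whose left-unmatched set avoids $\mc X$ and the row $\mb c_i^\intercal$ is then superfluous; the residual unmatched vertices lie in $\mc Z$ and must be coverable by the identity rows, hence contained in $\{z_j:j\in\mathbb N_i^-\}=\mc N_i^-\cap\mc Z$. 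Thus $\mc B(\tilde{\mb A}(\mc G),\tilde{\mb C}_i)$ admits a maximum matching with empty left-unmatched set iff $\mc B(\tilde{\mb A}(\mc G))$ admits a maximum matching with $\mc U_L\subset\mc N_i^-$ and $\mc U_L\cap\mc X=\emptyset$; carrying this out for each $i$, with $\mc U_L$ allowed to depend on $i$, yields condition~(ii).

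The step requiring the most care is the sufficiency reduction: structural observability only guarantees that \emph{some} realization with the prescribed sparsity is observable, whereas here $\mb A$ and $\mb C$ are fixed and only $\mb W(\mc G)$ varies, so one must show that the set of weight assignments making all $m$ pairs observable is nonempty, i.e.\ that the $(n+m)\times(n+m)$ minors of the observability matrix do not all vanish identically on this constrained family. I would establish this via the Popov--Belevitch--Hautus test together with the block structure: eigenvalues in $\mathrm{spec}(\mb A)\setminus\mathrm{spec}(\mb W(\mc G))$ are handled using observability of $(\mb A,\mb C)$ --- no eigenvector of $\mb A$ lies in $\ker\mb C$, and condition~(i) ensures this excitation propagates, for generic weights, to the observed sensor states $z_j$, $j\in\mathbb N_i^-$; eigenvalues in $\mathrm{spec}(\mb W(\mc G))$ reduce to observability of the sensor subsystem $(\mb W(\mc G),\mb I_m^{\mathbb N_i^-})$, whose generic observability under conditions~(i)--(ii) again follows from Theorem~\ref{thm:structObs}; and the two spectra are generically disjoint. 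The remaining delicate point is the matching bookkeeping of the previous paragraph --- that covering $\mc X$ and simultaneously confining the residual $\mc Z$-unmatched set to $\mc N_i^-$ can be realized by a \emph{single} maximum matching --- which is where the minimality clause of Lemma~\ref{maxMatPathCycle} enters.
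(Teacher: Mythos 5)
The paper does not actually prove this theorem: it is quoted verbatim from \cite{AlexandruCDC16} (Section~\ref{sec:termPR} states it ``for ease of comparison''), so there is no in-paper proof to compare against line by line. Judged on its own terms, your strategy is the natural one and is consistent with the machinery the paper does develop: apply the generic test of Theorem~\ref{thm:structObs} to each augmented pair $(\tilde{\mb A}(\mc G),\tilde{\mb C}_i)$, translate its two conditions through the block lower-triangular structure of $\tilde{\mb A}(\mc G)$ and Lemma~\ref{maxMatPathCycle}, and then close the gap between ``structurally observable'' and ``generically observable with $(\mb A,\mb C)$ fixed'' via a PBH argument --- which is exactly the content of Theorem~\ref{thm:almostAll} and of the paper's own sufficiency proof of Theorem~\ref{thm:iffW}. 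Your reduction of condition~(i) to mutual reachability of the sensor vertices, and of condition~(ii) to a maximum matching of $\mc B(\tilde{\mb A}(\mc G))$ with $\mc U_L\subset\mc N_i^-$ and $\mc U_L\cap\mc X=\emptyset$, is the right translation.

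Two points deserve flagging. First, this theorem is stated under the \emph{implicit assumption} $w_{ii}\neq 0$ recalled just before it, and your argument quietly uses that assumption in at least one place: the equivalence ``reaching some $z_j$ with $j\in\mathbb N_i^-$ is the same as reaching $z_i$'' needs $i\in\mathbb N_i^-$ (a self-loop) for the converse direction, and the guaranteed perfect matching on the $\mc Z$-block via diagonal edges is what makes the claim that the residual unmatched vertices ``lie in $\mc Z$ and must be coverable by the identity rows'' clean. You should state that you are working under that assumption, since the whole point of the paper's Theorem~\ref{thm:iffW} is what changes when it is dropped. Second, the necessity of condition~(ii) requires producing a \emph{single} maximum matching of $\mc B(\tilde{\mb A}(\mc G))$ that simultaneously saturates $\mc X$ and confines its sensor-side unmatched set to $\mc N_i^-$; stripping the output edges from a perfect-on-the-left matching of $\mc B(\tilde{\mb A}(\mc G),\tilde{\mb C}_i)$ can leave one state vertex (the one matched to $y_i$) unmatched, and re-matching it without pushing some unmatched sensor outside $\mc N_i^-$ needs the matching-exchange argument (Lemma~4 of \cite{PequitoJournal}, used in the paper's proof of Lemma~\ref{le:iffstructObs}). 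You correctly identify this as the delicate step but leave it as a pointer rather than carrying it out; as written the proposal is a sound strategy with that one step unverified rather than a complete proof.
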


Hence, Theorem~\ref{thm:decstructobs} can be used to obtain the next result to \emph{Problem~1} under the implicit assumption stated above.

\begin{theorem}[\cite{AlexandruCDC16}]\label{thm:almostAll}
If $(\mb A,\mb C)$ is observable and $(\tilde{\mb A}(\mc G),\tilde{\mb C}_i)$ is structurally observable $\forall i=1,\ldots,m$, then almost all realizations of $\mb W(\mc G)$ ensure that $(\tilde{\mb A}(\mc G),\tilde{\mb C}_i)$ is observable.
\hfill $\diamond$
\label{sufficiency}
\end{theorem}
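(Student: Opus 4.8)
The plan is to combine a standard genericity argument with the structural–observability hypothesis, through a PBH-type test that exploits the block lower-triangular form of $\tilde{\mb A}(\mc G)$. Fix $i$, regard $\mb A,\mb C$ as fixed, and let the free parameters $\mathbf w=(w_{ij})_{(i,j)\in\mc E}$ of $\mb W(\mc G)$ vary; the entries of the observability matrix of $(\tilde{\mb A}(\mc G),\tilde{\mb C}_i)$ are then polynomials in $\mathbf w$, so observability (rank $n+m$) is equivalent to the non-vanishing of a single polynomial $p_i(\mathbf w)$ — for instance the sum of squares of its $(n+m)\times(n+m)$ minors. A polynomial is either identically zero or nonzero outside a set of Lebesgue measure zero, so it suffices to prove $p_i\not\equiv 0$, i.e.\ that at least one realization of $\mathbf w$ makes $(\tilde{\mb A}(\mc G),\tilde{\mb C}_i)$ observable; doing this for each $i\in\{1,\dots,m\}$ and intersecting the $m$ resulting full-measure sets (a finite intersection) yields the claim.

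To produce a good realization for a fixed $i$ I would verify observability for a suitably generic $\mathbf w$ via the eigenvector (PBH) test. Since $\tilde{\mb A}(\mc G)$ is block lower-triangular with diagonal blocks $\mb A$ and $\mb W(\mc G)$, its spectrum is $\sigma(\mb A)\cup\sigma(\mb W(\mc G))$. If $\tilde{\mb A}(\mc G)v=\lambda v$, $\tilde{\mb C}_i v=0$ with $v=(v_1,v_2)\neq 0$, then $\mb A v_1=\lambda v_1$, $\mb c_i^\intercal v_1=0$, $(\lambda I-\mb W(\mc G))v_2=\mb C v_1$, and $v_2$ vanishes on $\mc J_i$. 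When $\lambda\in\sigma(\mb W(\mc G))$ with $\lambda\notin\sigma(\mb A)$, $v_1=0$ and $v_2$ would be a $\lambda$-eigenvector of $\mb W(\mc G)$ supported away from $\mc J_i$, impossible once $(\mb W(\mc G),\mb I_m^{\mc J_i})$ is observable; the latter holds for generic $\mathbf w$ because structural observability of $(\tilde{\mb A}(\mc G),\tilde{\mb C}_i)$ descends to structural observability of the sensor sub-pair $(\bar{\mb W},\bar{\mb I}_m^{\mc J_i})$ (the sparsity patterns of $\mb W(\mc G)$ and $\mb I_m^{\mc J_i}$): as $\tilde{\mb A}(\mc G)$ has no edges from sensor states to system states, in the state–output digraph of the augmented pair every path from a sensor vertex to an output ends at a readout of some $z_j$, $j\in\mc J_i$ — so condition (i) of Theorem~\ref{thm:structObs} descends — and restricting a maximum matching of the augmented bipartite graph with empty left-unmatched set to the sensor left-vertices gives such a matching for the sub-pair, so condition (ii) descends; the classical ``structural $\Rightarrow$ generic observability'' implication then applies. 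A shared eigenvalue $\lambda\in\sigma(\mb A)\cap\sigma(\mb W(\mc G))$ is a degenerate sub-case which for generic $\mathbf w$ either does not occur or is excluded by the same structural input.

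The decisive case is $\lambda\in\sigma(\mb A)$ with, generically, $\lambda\notin\sigma(\mb W(\mc G))$: then $v_2=(\lambda I-\mb W(\mc G))^{-1}\mb C v_1$ is uniquely determined, and observability of $(\mb A,\mb C)$ makes $\mb C$ injective on the $\lambda$-eigenspace of $\mb A$ (a zero there would be an unobservable mode of $(\mb A,\mb C)$). Letting $H_\lambda$ be the intersection of that eigenspace with $\ker\mb c_i^\intercal$, a nonzero ``bad'' $v_1$ exists exactly when the subspace $\mb C(H_\lambda)$ meets $(\lambda I-\mb W(\mc G))\bigl(\ker\mb I_m^{\mc J_i}\bigr)$ nontrivially. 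So the remaining task is to choose $\mathbf w$ so that for each of the finitely many $\lambda\in\sigma(\mb A)$ this intersection is $\{0\}$; granting that, the PBH test passes and $(\tilde{\mb A}(\mc G),\tilde{\mb C}_i)$ is observable for this $\mathbf w$, which is what the reduction in the first paragraph requires.

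This transversality requirement is the main obstacle, and it is precisely where both hypotheses are used jointly: classical structural-systems results furnish an observable realization only when \emph{all} entries of $\tilde{\mb A}(\mc G)$ are free, whereas here $\mb A$ and $\mb C$ are pinned to one specific observable pair, so one must show that the path condition and the maximum matching with empty left-unmatched set of Theorems~\ref{thm:structObs}--\ref{thm:decstructobs} supply enough independent routes in $\bar{\mb W}$, from the sensor nodes fed by $\mb C$ into $\mc J_i$, to force $\mb C(H_\lambda)\cap(\lambda I-\mb W(\mc G))(\ker\mb I_m^{\mc J_i})=\{0\}$ for \emph{every} fixed observable $(\mb A,\mb C)$ — in particular that the matching condition already delivers the dimension count $\dim H_\lambda\le|\mc J_i|$ at each eigenvalue. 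I would make this precise either by assigning weights along a maximum matching of the augmented bipartite graph edge by edge and using the unmatched-free structure to keep the relevant minors nonzero, or, more economically, by noting that the failure of the transversality is itself the vanishing of a nontrivial polynomial in $\mathbf w$ whose identical vanishing would contradict structural observability of $(\tilde{\mb A}(\mc G),\tilde{\mb C}_i)$, so a generic $\mathbf w$ avoids it. The remaining bookkeeping — the measure-zero estimates, the finite intersection over $i$, the descent to the sensor sub-pair, and the generic spectral separation — is routine.
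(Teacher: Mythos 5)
Your overall skeleton (reduce ``almost all'' to ``at least one'' via a polynomial non-vanishing argument, then run the PBH test on the block lower-triangular $\tilde{\mb A}(\mc G)$, splitting the spectrum into $\sigma(\mb A)$ and $\sigma(\mb W(\mc G))$) is the same route the paper takes when it re-uses this result inside the sufficiency part of Theorem~\ref{thm:iffW}, and your descent of structural observability to the sensor sub-pair $(\bar{\mb W},\bar{\mb I}_m^{\mc J_i})$ is sound. However, the proof is not complete: the step you yourself flag as ``the main obstacle'' --- forcing $\mb C(H_\lambda)\cap(\lambda \mb I-\mb W(\mc G))(\ker\mb I_m^{\mc J_i})=\{0\}$ for every $\lambda\in\sigma(\mb A)$ --- is exactly the content of the theorem, and neither of your two proposed ways of closing it works as stated. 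The ``more economical'' fallback is circular: the transversality-failure polynomial is a polynomial in $\mathbf w$ alone, with $(\mb A,\mb C,\lambda)$ pinned to one specific realization, whereas structural observability of $(\tilde{\mb A}(\mc G),\tilde{\mb C}_i)$ only asserts non-vanishing of a polynomial in \emph{all} free parameters jointly; identical vanishing on the slice $\{(\mb A,\mb C)\ \text{fixed}\}$ contradicts nothing. This is precisely why the theorem carries the separate hypothesis that $(\mb A,\mb C)$ is observable, and why the paper cannot simply invoke ``structural $\Rightarrow$ generic'' but instead exhibits a realization explicitly, via the linking $\mc P_i$ from the sensors that match the left-unmatched state vertices into $\mc J_i$ together with a cycle family spanning the remaining sensor vertices.

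A second concrete gap is your dismissal of the shared-eigenvalue case. Whenever $\mc B(\bar{\mb A})$ has left-unmatched vertices (the interesting case), the matrix $\mb W(\mc G)$ satisfying the graph conditions is \emph{structurally} singular --- the linking paths contribute a nilpotent block --- so $0\in\sigma(\mb W(\mc G))$ for \emph{every} realization, and if $0\in\sigma(\mb A)$ as well the overlap cannot be removed by genericity. Your two case analyses ($\lambda\in\sigma(\mb W)\setminus\sigma(\mb A)$ handled by observability of $(\mb W,\mb I_m^{\mc J_i})$; $\lambda\in\sigma(\mb A)\setminus\sigma(\mb W)$ handled by inverting $\lambda\mb I-\mb W$) do not combine here, because $\lambda\mb I-\mb W$ is no longer invertible and $v_2$ is not uniquely determined by $v_1$. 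The paper's argument handles exactly this point by permuting $\mb A$ and $\mb W(\mc G)$ into cycle blocks $\mb A_{\mc C},\mb W_{\mc C}$ (whose nonzero eigenvalues can be placed apart from one another and from zero) and path blocks $\mb A_{\mc P},\mb W_{\mc P}$ (nilpotent), and then proving observability of the combined nilpotent sub-block from the outputs $\mb I_m^{\mc J_i}$ at the ends of the linkings. Until you either carry out your first strategy (explicit weight assignment along a maximum matching and the linking) or reproduce this block decomposition, the decisive case of the theorem remains unproved.
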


%
For brevity's sake, we will use the shortened notation $\tilde{\mb A} = \tilde{\mb A}(\mc G)$ in the rest of the paper.

\vspace{-0.1cm}
\section{Limited Communication Analysis and Design}

In this section, we introduce the main results of this paper. Lemma~\ref{le:iffstructObs} shows that the sensor capabilities impose strong constraints on the network's structure required to ensure structural observability. This technical result plays a key role in understanding Theorem~\ref{thm:iffW}, which states the necessary and sufficient conditions required to address \emph{Problem~1}. Specifically, it provides the conditions for the communication graph $\mathcal G$ such that $(\tilde{\mb A},\tilde{\mb C}_i)$ is observable, $i=1,\ldots,m$.~Next, we consider the design of communication graphs that attain the former conditions, while minimizing the total cost incurred by the communication between the sensors (\emph{Problem~2}). In particular, we cast the problem as an integer programming problem that can be solved with \mbox{off-the-shelf} solvers.

We start by showing that the structural observability of a pair $(\bar{\mb{A}},\bar{\mb{C}})$, that is often assessed through the \mbox{state-output} graph properties (as captured in Theorem~\ref{thm:structObs}), can enforce a particular structure of $\bar{\mb A}$ under certain sensing capabilities.

\begin{lemma}\label{le:iffstructObs} Let $\mb e_i\in \mathbb{R}^{p\times 1}$ be the canonical column-vector with one in the $i$'th position and the remaining entries equal to zero.
Given a structured adjacency matrix of a graph, $\bar{\mb M}\in\{0,1\}^{p \times p}$, the pairs $(\bar {\mb M},\mb e_i^\intercal)$, for $i=1,\ldots,p$, are structurally observable if and only if the associated state digraph $\mc D(\bar{\mb M})$ is strongly connected and spanned by a disjoint union of cycles. \hfill $\diamond$
\end{lemma}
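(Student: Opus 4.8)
The plan is to invoke Theorem~\ref{thm:structObs} for each pair $(\bar{\mb M},\mb e_i^\intercal)$, $i=1,\dots,p$, and to show that requiring its two conditions for every $i$ translates, respectively, into strong connectivity of $\mc D(\bar{\mb M})$ and into $\mc D(\bar{\mb M})$ being spanned by a disjoint union of cycles. Two elementary remarks drive the translation. First, the state-output digraph $\mc D(\bar{\mb M},\mb e_i^\intercal)$ is obtained from $\mc D(\bar{\mb M})$ by adding one output vertex whose unique incoming edge comes from $x_i$. Second, the bipartite graph $\mc B(\bar{\mb M},\mb e_i^\intercal)$ is $\mc B(\bar{\mb M})$ with one extra right vertex and the single extra edge joining the left copy of $x_i$ to it; in particular, since $\mc B(\bar{\mb M},\mb e_i^\intercal)$ has exactly $p$ left vertices, a maximum matching with empty left-unmatched set is precisely a matching saturating all left vertices. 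We take $p\ge 2$ throughout.

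Consider condition~(i) of Theorem~\ref{thm:structObs} for the output $\mb e_i^\intercal$: it demands a directed path from every state vertex to the unique output vertex, which by the first remark amounts to every vertex of $\mc D(\bar{\mb M})$ reaching $x_i$. Imposing this for all $i=1,\dots,p$ is by definition strong connectivity of $\mc D(\bar{\mb M})$, and conversely strong connectivity makes condition~(i) hold for every $i$; this settles the reachability condition in both directions.

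For condition~(ii), the ``easy'' implication is that if $\mc D(\bar{\mb M})$ is spanned by a disjoint union of cycles, then the associated permutation $\sigma$ (with $[\bar{\mb M}]_{\sigma(j),j}\ne 0$ for all $j$) gives a perfect matching of $\mc B(\bar{\mb M})$, hence a left-saturating maximum matching of every $\mc B(\bar{\mb M},\mb e_i^\intercal)$, so that together with the reachability above Theorem~\ref{thm:structObs} certifies structural observability of each $(\bar{\mb M},\mb e_i^\intercal)$. The converse is where the work lies, and I expect it to be the main obstacle: suppose every $\mc B(\bar{\mb M},\mb e_i^\intercal)$ admits a left-saturating matching but, for contradiction, $\mc B(\bar{\mb M})$ has no perfect matching. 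By Hall's theorem there is a set $S$ of left vertices whose set of neighbors in $\mc B(\bar{\mb M})$ has fewer than $|S|$ elements; strong connectivity (already forced by condition~(i)) excludes zero rows of $\bar{\mb M}$, so every right vertex of $\mc B(\bar{\mb M})$ has a neighbor and hence $S\ne\mc X$, which lets us choose $i$ with $x_i\notin S$. Then the extra edge of $\mc B(\bar{\mb M},\mb e_i^\intercal)$ is disjoint from $S$, so $S$ still violates Hall's condition there, contradicting the assumed left-saturating matching. Hence $\mc B(\bar{\mb M})$ has a perfect matching $M^\ast$; since $M^\ast$ leaves no left- or right-unmatched vertices, Lemma~\ref{maxMatPathCycle} applied to $\mc B(\bar{\mb M})$ forces the spanning subgraph of $\mc D(\bar{\mb M})$ that it induces to contain no elementary path, i.e.\ to be a disjoint union of cycles. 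Combining the two translations yields the claimed equivalence.
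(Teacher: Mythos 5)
Your proof is correct, and the outer skeleton is the same as the paper's (reduce structural observability of all $p$ pairs to the two conditions of Theorem~\ref{thm:structObs}, identify condition~(i) for all $i$ with strong connectivity, and condition~(ii) for all $i$ with the existence of a perfect matching in $\mc B(\bar{\mb M})$, hence via Lemma~\ref{maxMatPathCycle} with a spanning disjoint union of cycles). The difference lies in the hard step, namely showing that left-saturating matchings in every $\mc B(\bar{\mb M},\mb e_i^\intercal)$ force a perfect matching in $\mc B(\bar{\mb M})$. The paper argues by contradiction on the cardinality of the left-unmatched set of a maximum matching of $\mc B(\bar{\mb M})$: for $|\mc U_L|=1$ it uses strong connectivity to produce an in-neighbor of the right-unmatched vertex, invokes Lemma~4 of \cite{PequitoJournal} to recombine unmatched vertices of two maximum matchings, and augments the resulting matching to a contradiction; the case $|\mc U_L|>1$ is reduced to this via augmenting paths. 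You instead apply Hall's theorem: a deficient left set $S$ cannot be all of $\mc X$ (strong connectivity rules out zero rows, so $N(\mc X)=\mc X$), hence some $i$ has $x_i\notin S$, and since the single extra edge of $\mc B(\bar{\mb M},\mb e_i^\intercal)$ is incident only to $x_i$, the set $S$ stays deficient there, contradicting condition~(ii) for that $i$. Your route is more elementary and self-contained (no appeal to the auxiliary matching-recombination lemma), and it handles all deficiency cardinalities at once rather than inducting down to $|\mc U_L|=1$; the paper's route stays entirely inside the maximum-matching/unmatched-vertex vocabulary it uses elsewhere. Your explicit restriction to $p\ge 2$ is also a sensible precaution, since for $p=1$ with $\bar{\mb M}=0$ the pair is observable while the single-vertex digraph has no spanning cycle.
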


\begin{proof}
\emph{(Necessity)}~Assume $(\bar {\mb M},\mb e_i^\intercal)$ is structurally observable, for all $ i=1,\ldots,p$. Suppose by contradiction that $\mc D(\bar{\mb M})$ is not strongly connected. If $\mc D(\bar{\mb M})$ is not strongly connected, then the \mbox{state-output} digraph $\mc D(\bar{\mb M}, \mb e_i^\intercal)$ is also not strongly connected and its directed acyclic representation contains a number of strongly connected components. Since the output vertex in $\mc D(\bar{\mb M}, \mb e_i^\intercal)$ is one vertex $y_i$ connected only to vertex~$x_i$, then it readily follows that condition (i) of Theorem~\ref{thm:structObs} cannot hold.

Next, we prove that $\mc D(\bar{\mb M})$ is spanned by a disjoint union of cycles.
Consider condition (ii) of Theorem~\ref{thm:structObs}: there exists a maximum matching in $\mc B(\bar{\mb M}, \mb e_i^\intercal)$ such that there is no \mbox{left-unmatched} vertex in $\mc X$. There are two possibilities for the maximum matchings in $\mc B(\bar{\mb M})$: (a) $\mc U_L = \emptyset$ and (b) $\mc U_L \neq \emptyset$. Case (a) means that there is a perfect matching in $\mc B(\bar{\mb M})$, i.e., from Lemma~\ref{maxMatPathCycle}, $\mc D(\bar{\mb M})$ is spanned by a disjoint union of cycles.

We want to prove now that case (b) cannot happen when $(\bar {\mb M},\mb e_i^\intercal)$ is structurally observable. Let us first address the case when $|\mc {U}_L| = 1$. Let $\mc X$ represent the set of vertices in the graph described by $\bar{\mb M}$. Consider a maximum matching $M^1$ in $\mc B(\bar{\mb M})$ that has $\mc U_L(M^1) = \{x_j\}$ and $\mc U_R(M^1) = \{x_k\}$, for some vertices $x_j,x_k\in\mathcal X$. Since $\mc D(\bar{\mb M})$ is strongly connected, then, there exists a neighbor $x_l$ of $x_k$ such that $(x_l,x_k) \in \mc E_{\mc X,\mc X}$. Let $M^2$ be another maximum matching in $\mc B(\bar{\mb M})$ such that $\mc U_L = \{x_l\}$ and $\mc U_R = \{x_m\}$, for some vertex $x_m\in\mathcal X$. By Lemma~4 in \cite{PequitoJournal}, there exists a maximum matching $M^\Delta$ such that $\mathcal U_L(M^\Delta) = \{x_l\}$ and $\mathcal U_R(M^\Delta) = \{x_k\}$. However, $M^\ast = M^\Delta \cup \{(x_l,x_k)\}$ is also a maximum matching, in fact, a perfect matching, which leads to a contradiction of the fact that $M^\Delta$ is a maximum matching. Therefore, the set of \mbox{left-unmatched} vertices has to be empty, meaning a maximum matching is also a perfect matching, leading to the fact that $D(\bar{\mb M})$ is spanned by a disjoint union of cycles. 
Now, for the case when $|\mc {U}_L| > 1$, we can iteratively find augmented paths \cite {DiestelGT05} and construct larger cardinality maximum matchings, while thus reducing the cardinality of the set of \mbox{left-unmatched} vertices with respect to those matchings, until $|\mc {U}_L| = 1$.

\emph{(Sufficiency)}~Assume $\mc D(\bar{\mb M})$ is strongly connected and spanned by a disjoint union of cycles. It follows that also $\mc D(\bar{\mb M},\mb e_i^\intercal)$ is strongly connected and condition (i) from Theorem~\ref{thm:structObs} is satisfied. Since $\mc D(\bar{\mb M})$ is spanned by a disjoint union of cycles, by Lemma~\ref{maxMatPathCycle}, there exists a perfect matching $M^\ast$ (which is also a maximum matching) in $\mc B(\bar{\mb M},\mb e_i^\intercal)$. This implies condition (ii), i.e., $\mc U_L(M^\ast) = \emptyset$.
\end{proof}

\begin{remark}
In the proof of Lemma~\ref{le:iffstructObs}, the technical challenge is to show that the state vertices in the \mbox{state-output} bipartite graph cannot be always matched by an edge whose right-vertex is a sensor vertex (when the state graph is strongly connected), which implies that those states need to be always matched by edges whose end-points are state vertices. Therefore, by leveraging Lemma~\ref{maxMatPathCycle}, it follows that the state graph has to be spanned by cycles. \hfill $\diamond$
\end{remark}

\vspace{-0.1cm}
\section{Main Results}\label{sec:main}

Before we present the solution to the former problem, we need to review the notion of \emph{linking} (see, for instance,~\cite{dionSurvey}) from the vertices in the state digraph to the vertices in the communication digraph in the \mbox{state-output} graph associated with the augmented system. Specifically, a linking~$\mc P$ is a set of \mbox{vertex-disjoint} and simple paths in $\mc D(\tilde{\mb A})$, from the vertices in $\mc D(\bar{\mb A})$ to the vertices in $\mc D(\mb W(\mc G))$. Additionally, for each sensor $i$, we denote by $\mc P_i$ the \mbox{communication-linking} (a linking where both the starting and ending vertices in the \mbox{vertex-disjoint} simple paths belong to the communication graph) from the set of sensor vertices that belong to the edges in a maximum matching of $\mc B(\bar{\mb A}, \bar{\mb C})$ to a subset of \mbox{in-neighbors} of sensor $i$ ($\mathcal J_i$) with equal cardinality. In particular, there are as many of those sensor vertices as \mbox{left-unmatched} vertices in a maximum matching associated to $\mc B(\bar{\mb A})$ due to the observability of $(\mb A,\mb C)$, and its structural observability, as prescribed by Theorem~\ref{thm:structObs}. 
Consequently, the solution to \emph{Problem~1} can be formally stated as follows.

\begin{theorem}\label{thm:iffW}
Consider the system~\eqref{dynAugmented}-\eqref{outputAugmented}. For $(\mb A,\mb C)$ observable, the pair $(\tilde{\mb A}, \tilde{\mb C}_i)$ is observable for all $ i=1,\ldots,m$, if and only if $\mc D(\mb W(\mc G))$ is strongly connected and there exists a linking $\mathcal P_i$ such that $\mc D(\mb W(\mc G\setminus \mathcal P_i))$ is spanned by a disjoint union of cycles, for all $i=1,\ldots,m$. \hfill $\diamond$
\end{theorem}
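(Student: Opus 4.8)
The plan is to reduce the observability of $(\tilde{\mb A}, \tilde{\mb C}_i)$ to the structural-observability conditions of Theorem~\ref{thm:structObs} applied to the augmented pair $(\bar{\tilde{\mb A}}, \bar{\tilde{\mb C}}_i)$, then to translate those two graph-theoretic conditions into the stated conditions on $\mc D(\mb W(\mc G))$. Since $(\mb A,\mb C)$ is assumed observable, by Theorem~\ref{thm:almostAll} it suffices to characterize when $(\tilde{\mb A},\tilde{\mb C}_i)$ is \emph{structurally} observable for every $i$; the remaining gap (``structural'' versus actual observability) is closed by Theorem~\ref{thm:almostAll} together with the fact that the nonzero weights $w_{ij}$ are free parameters, and also by noting that the block-triangular structure of $\tilde{\mb A}$ in~\eqref{dynAugmented} means the $\mb A$-block is already observable through $\mb C$, so genericity only needs to be invoked on the $\mb W(\mc G)$-block. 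I would open by stating this reduction explicitly.

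Next I would unpack condition~(i) of Theorem~\ref{thm:structObs} for $\mc D(\bar{\tilde{\mb A}}, \bar{\tilde{\mb C}}_i)$: every vertex (state vertex in $\mc X$ and sensor vertex in $\mc Z$) must reach an output vertex. Because $\tilde{\mb C}_i$ only picks out sensor states in $\mathcal J_i \subseteq \mc Z$ and (for the state part) the scalar $\mb c_i^\intercal$, and because every state vertex already feeds into some sensor via $\mb C$ — here one uses that $(\mb A,\mb C)$ is observable, hence $\mc D(\bar{\mb A},\bar{\mb C})$ has the all-vertices-reach-output property, so every state reaches some $z_j$ — the reachability of the output reduces to: every sensor vertex $z_j$ can reach the ``observed'' sensor vertices $\mathcal J_i$. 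Combined over all $i$ and using that $\mathcal J_i$ is nonempty, this is exactly strong connectivity of $\mc D(\mb W(\mc G))$. I expect this to be the cleaner of the two halves, essentially a reachability bookkeeping argument using the block structure of $\tilde{\mb A}$.

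Then I would handle condition~(ii): there is a maximum matching of $\mc B(\bar{\tilde{\mb A}}, \bar{\tilde{\mb C}}_i)$ with no left-unmatched state or sensor vertex. The state vertices $\mc X$ can be matched among themselves using a maximum matching of $\mc B(\bar{\mb A})$; its left-unmatched vertices must then be matched either to output rows of $\tilde{\mb C}_i$ (the $\mb c_i^\intercal$ part, but that is a single row and cannot absorb them structurally in the strongly connected sensor regime — cf. Lemma~\ref{le:iffstructObs} and its Remark) or must be ``pushed'' onto sensor vertices through the $\mb C$-edges, which in turn forces a rerouting of some sensor vertices onto the identity rows $\mb I_m^{\mathcal J_i}$. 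Formalizing ``pushing'' is precisely the content of the communication-linking $\mc P_i$ defined just before the theorem: $\mc P_i$ routes the $|\mc U_L(M^\ast_{\bar{\mb A}})|$ problematic sensor vertices to distinct in-neighbors in $\mathcal J_i$, and removing the edges of $\mc P_i$ from $\mc D(\mb W(\mc G))$ must leave a graph spanned by disjoint cycles so that, by Lemma~\ref{maxMatPathCycle} (in the spirit of Lemma~\ref{le:iffstructObs}), the remaining sensor vertices admit a perfect self-matching. Conversely, given such a linking, I would assemble the matching $M^\ast_{\bar{\mb A}} \cup (\text{self-matching of } \mc D(\mb W(\mc G\setminus\mc P_i))) \cup (\text{the edges of } \mc P_i \text{ composed into matching edges}) \cup (\text{identity-row edges for } \mathcal J_i)$ and check it leaves $\mc U_L = \emptyset$. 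This rerouting correspondence — showing that condition~(ii) holds for \emph{some} maximum matching iff such a linking $\mc P_i$ exists — is the main obstacle: one must argue both directions carefully, using the minimality statement of Lemma~\ref{maxMatPathCycle} and the matching-manipulation lemma (Lemma~4 of~\cite{PequitoJournal}, already invoked in the proof of Lemma~\ref{le:iffstructObs}) to convert between matchings differing by augmenting paths, and one must verify the two special cases $\mathcal J_i = \mathbb N_i^-$ and $\mathcal J_i = \{i\}$ are both covered since the definition of $\mc P_i$ is stated in terms of $\mathcal J_i$. I would finish by remarking that the ``if'' direction gives structural observability for every $i$, whence Theorem~\ref{thm:almostAll} upgrades it to observability, completing the equivalence.
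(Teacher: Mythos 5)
Your structural analysis (translating conditions (i) and (ii) of Theorem~\ref{thm:structObs} applied to $(\tilde{\mb A},\tilde{\mb C}_i)$ into strong connectivity of $\mc D(\mb W(\mc G))$ plus the linking/cycle-spanning condition, via the rerouting argument and Lemma~\ref{le:iffstructObs}) tracks the paper's argument closely. However, there is a genuine gap in how you close the deal: you dispose of the step from \emph{structural} observability to actual observability by invoking Theorem~\ref{thm:almostAll} and ``the fact that the nonzero weights $w_{ij}$ are free parameters.'' Theorem~\ref{thm:almostAll} is a result of~\cite{AlexandruCDC16} established only under the Implicit Assumption $w_{ii}\neq 0$ for all $i$, which is precisely what Theorem~\ref{thm:iffW} is meant to waive. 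Its proof there relies on using the free diagonal of $\mb W(\mc G)$ to place the eigenvalues of $\mb W(\mc G)$ arbitrarily (in particular, away from the spectrum of $\mb A$ and away from zero), after which the PBH test decouples block by block. In the generalized setting (memoryless sensors, $\mathcal J_i=\{i\}$), the structure of $\mb W(\mc G)$ forces zero eigenvalues: the vertices lying on the linking $\mc P_i$ that are not covered by cycles contribute nilpotent blocks whose eigenvalues cannot be moved, and $\mb A$ may likewise have structurally fixed zero eigenvalues. At the shared eigenvalue $\lambda=0$ the block-triangular PBH matrix does not split into independent conditions on the $\mb A$-block and the $\mb W$-block, so your claim that ``genericity only needs to be invoked on the $\mb W(\mc G)$-block'' does not hold as stated.

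The paper devotes the second half of its sufficiency proof to exactly this point: it permutes $\mb A$ and $\mb W(\mc G)$ into cycle blocks $\mb A_{\mc C},\mb W_{\mc C}$ (whose eigenvalues can be chosen nonzero and distinct) and path blocks $\mb A_{\mc P},\mb W_{\mc P}$ (whose eigenvalues are stuck at zero), and then shows that the pair formed by the path blocks together with the output rows $\mb I_m^{\mathbb N_i^-}$ is observable --- because the end-vertices of the linking $\mc P_i$ are directly measured --- so the PBH rank does not drop at $\lambda=0$. Without this (or an equivalent) argument, your proof establishes only structural observability of $(\tilde{\mb A},\tilde{\mb C}_i)$, not the observability claimed in the theorem. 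To repair the proposal you would need to either reprove Theorem~\ref{thm:almostAll} without the implicit assumption, or supply the PBH/block-decomposition step explicitly.
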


\begin{proof}
\emph{(Necessity)}~It is enough to show that one condition from Theorem~\ref{thm:decstructobs} or Theorem~\ref{thm:almostAll} does not hold when $\mc D(\mb W(\mc G\setminus \mc P_i))$ is not spanned by a disjoint union of cycles or that $\mathcal D(\mb W(\mc G))$ is not strongly connected. Therefore, assume that $(\tilde{\mb A}, \tilde{\mb C}_i)$ is structurally observable $\forall i=1,\ldots,m$ and $\mathcal D(\mb W(\mc G))$ is not strongly connected. The proof follows along the same lines as the proof of necessity in Lemma~\ref{le:iffstructObs} since condition (i) of Theorem~\ref{thm:decstructobs} fails if $\mc D(\mb W(\mc G))$ is not strongly connected. Now, assume that $(\tilde{\mb A}, \tilde{\mb C}_i)$ is structurally observable $\forall i=1,\ldots,m$ and $\mc D(\mb W(\mc G\setminus \mc P_i))$ is not spanned by a disjoint union of cycles. Using the second part of the proof of Lemma~\ref{le:iffstructObs} for $\bar{\mb M}$ corresponding to each of the strongly connected components of $\mc D(\mb W(\mc G\setminus \mc P_i))$ proves the contradiction to condition (ii) of Theorem~\ref{thm:decstructobs} if $\mc D(\mb W(\mc G\setminus \mc P_i))$ is not spanned by a disjoint union of cycles.

\emph{(Sufficiency)}~Assume $\mc D(\mb W(\mc G\setminus \mathcal P_i))$ is spanned by a disjoint union of cycles for all $i=1,\ldots,m$, and $\mc D(\mb W(\mc G))$ is strongly connected. In order to prove sufficiency, we follow similar steps to those in the proof in \cite{AlexandruCDC16} 
and show that the same conditions are satisfied by a more general $\mb W(\mc G)$.

The necessity and sufficiency of condition~(i) and sufficiency of condition~(ii) in Theorem~\ref{thm:decstructobs} follow as in \cite{AlexandruCDC16}, since $\mc D(\mb W(\mc G))$ is assumed to be strongly connected.
The original system \eqref{dynamics}-\eqref{output} is also structurally observable, and, from Theorem~\ref{thm:structObs}, we know there exists a maximum matching $M$ associated to $\mathcal B(\bar{\mathbf A})$ such that, for every \mbox{left-unmatched} vertex~$x\in\mathcal U_L(M)$, there is a distinct sensor associated to it. Expanding the maximum matching $M$ to the augmented system's bipartite state graph $\mathcal B(\tilde{\mathbf A})$, we can match all the vertices $x\in\mathcal U_L(M)$ with a distinct sensor measuring it. This yields that the only possible \mbox{left-unmatched} vertices in $\mathcal{B}(\tilde{\mathbf A})$ are $z\in \mathcal Z$, hence, $\mathcal U_L\cap \mathcal X = \emptyset$. All sensors $z_j$ that are not right-matched by a path from a previously unmatched state vertex are spanned by disjoint cycles, by the assumption on $\mc D(\mb W(\mc G))$. Then, either these \mbox{left-unmatched} vertices are already \mbox{in-neighbors} of sensor $z_i$, or, following a similar procedure as in the proof of Lemma~\ref{le:iffstructObs}, we can find another maximum matching such that $\mathcal U_L\subset \mathcal N_i^-$.



Next, to show that there exists a realization of $\mb W(\mc G)$ that ensures observability of $(\tilde{\mb{A}},\tilde{\mb{C}}_i)$, we leverage the proof of Theorem~\ref{thm:almostAll} in \cite{AlexandruCDC16}. Specifically, we invoke the Popov-Belevitch-Hautus criterion to assess the observability of the system $(\tilde{\mb A},\tilde{\mb C}_i)$. As a result, $\mb W(\mc G)$ must be such that the following equalities hold for $\lambda \in\mathbb{C}, \ i=1,\ldots,m$:\vspace{-0.1cm}
\begin{align*}
\text{rank}\underbrace{\left [ \begin{matrix}
\tilde{\mb A}-\lambda \mb{I}_{n+m}\\
\tilde{\mb C}_i
\end{matrix}\right ]}_{\mb M}& = \text{rank}\left[
\begin{smallmatrix}
\mathbf A-\lambda \mb{I}_{n} &\mathbf{0}_{n\times m}\\
\mathbf C& \mathbf W(\mathcal G)-\lambda \mb{I}_{m}\\
-\ \mathbf c_i^\intercal\ - & \mathbf{0}_{1\times m}\\
\mathbf{0}_{|\mathbb N_i^-|\times n} &\mb{I}_{m}^{\mathbb N_i^-}\end{smallmatrix}
\right]=n+m.
\end{align*}\vspace{-0.3cm}

The structure of $\mb W(\mc G)$ does not allow arbitrary placing of the eigenvalues, as opposed to the proof of Theorem~\ref{thm:almostAll} in~\cite{AlexandruCDC16}. However, we are able to prove that the eigenvalues of $\mb W(\mc G)$ that we cannot place do not affect the rank of $\mb M$. The eigenvalues associated with the cycles in $\mb W(\mc G\setminus \mc P_i)$ can be arbitrarily placed: for each cycle, composed of edges with weights $w^1,\ldots, w^{r_j}$, where $1\leq j \leq |\mc C|$ and $|\mc C|$ is the number of cycles, the associated eigenvalues will have the values: $\lambda_k^j = \sqrt[r_j]{w^1\cdot\ldots\cdot w^{r_j}}e^{2\pi i \frac{k-1}{r_j}},k=1,\ldots,r_j$. The eigenvalues that cannot be placed are associated to the paths $\mc P_i$ and will be zero \cite{Coates59, Reinschke:1988}. The same analysis holds for $\mb A$, i.e., the zero eigenvalues are associated with the paths that are not spanned by cycles. More specifically, the vertices on these paths are exactly the \mbox{left-unmatched} vertices with respect to a maximum matching in $\mc B(\tilde{\mb A},\tilde{\mb C}_i)$. In the \mbox{state-output} digraph $\mc D(\tilde{\mb A}, \tilde{\mb C}_i)$, in order to match these \mbox{left-unmatched} vertices, the paths are extended through the links described by $\mb C$ to the sensors' states, in the communication graph $\mc D(\mb W(\mc G))$. Let $p$ be the number of vertices in the minimum length paths in $\mc D(\bar{\mb A})$ and $\mc D(\mb W(\mc G))$ that are not spanned by cycles, i.e., corresponding to the zero eigenvalues in $\mb A$ and $\mb W(\mc G)$. 
By suitable permutations, we can separate the blocks in $\mb A$ and $\mb W(\mc G)$ associated to the disjoint cycles, denoted symbolically by $\mb A_{\mc C},\mb W_{\mc C}$ and the blocks associated to the linkings $\mc P_i$, respectively $\mb A_{\mc P},\mb W_{\mc P}$:\vspace{-0.1cm}
\[\begin{array}{ll}
\underbar{M}:=\left[
\begin{matrix}
\mathbf A &\mathbf{0}_{n\times m}\\
\mathbf C& \mathbf W(\mathcal G)\end{matrix}
\right] = \left[
\begin{smallmatrix}
 \mb A_{\mc P} & \mb 0 & \ast & \mb 0\\
\mb 0 & \mb W_{\mc P} & \mb 0 & \ast \\
\ast & \mb 0 & \mb A_{\mc C} & \mb 0\\
\mb 0 & \ast & \mb 0 & \mb W_{\mc C}
\end{smallmatrix}
\right].
\end{array}
\]

The eigenvalues in $\mb W_{\mc P}$ can be chosen to be different than the eigenvalues of $\mb A_{\mc C}$, which are non-zero, and different than zero, hence $\underbar{M}$ has $n+m-p$ non-zero eigenvalues. Moreover, the end-vertices of the linkings $\mc P_i$ are measured by the outputs given by $\mb{I}_{m}^{\mathbb N_i^-}$. Therefore, the pair $(\underbar{M}_{(1:p,1:p)},\mb{I}_{m}^{\mathbb N_i^-})$ is observable and, by Popov's criterion,\vspace{-0.1cm}
\[\text{rank} \left[ \begin{array}{c}
 \underbar{M}_{(1:p,1:p)}-\lambda \mb{I}_p\\
 \mb{I}_m^{\mathbb N_i^-}
\end{array}\right] = p.\]
This completes the proof that rank$(\mb M) = n+m$, i.e., the conditions of Theorem~\ref{thm:iffW} are sufficient.
%
\end{proof}


\begin{remark}\label{re:particularization} Theorem~\ref{thm:iffW} accounts for scenarios where the sensors are memoryless, i.e., they do not retain their previous state to integrate it in the overall dynamics. This extends the results in~\cite{AlexandruCDC16}, revisited in Section~\ref{sec:termPR}. 
Specifically, the case where sensors are not readily memoryless leads to the case where the communication graph is strongly connected and has a subgraph spanned by a disjoint union of cycles, since the access of a sensor to its state and incorporation in the overall dynamics corresponds to a self-loop in the communication graph, which is an elementary cycle. \hfill $\diamond$
\end{remark}

\begin{remark}\label{re:radio}
In the context of limited communication decentralized estimation schemes that employ sensors which cannot discern between the contributions coming from their neighbors due to the technology used, i.e., when $\mathcal J_i=\{i\}$ in~\eqref{outputAugmented}, it follows that $\mb A$ is at most rank $n-1$. More specifically, there will be only one possible \mbox{communication-linking} ending at the $i$'th sensor vertex, implying that the state bipartite graph's maximum matching can have at most one \mbox{left-unmatched} vertex. 
\hfill $\diamond$
\end{remark}

Finally, given the necessary and sufficient conditions for the communication graph provided in Theorem~\ref{thm:iffW}, we aim to formulate the problem of designing a minimum cost communication graph, as stated in \emph{Problem~2}, as an integer programming problem. To this end, we leverage 
problems such as the minimum cost maximum matching problem and minimum cost spanning trees~\cite{nemhauser1988integer}. We also need to formulate the conditions on the communication graph, which require to encode the minimum cardinality linkings~$\mc P_i$.

To better visualize the results, we write the communication graph as $\mathcal G=(\mathcal Z,\mathcal E_{\mc Z,\mc Z})$, where $\mc Z$ represents the set of sensor states, and $\mc E_{\mc Z,\mc Z}$ the set of communication links between~the sensors. Let $\mb \Omega_e\in \mathbb{R}^+_{\geq0}\cup\{\infty\}$ be the communication cost incurred by establishing a link $e=(i,j)\in \mathcal E_{\mc Z,\mc Z}$ from sensor~$j$ to sensor~$i$. If we want to obtain a communication graph dealing with memoryless sensors, then we prescribe~$\mb \Omega_{ii} = \infty$, and obtain a finite cost graph as a feasible solution.

Briefly, the constraints that have to be satisfied by $\mc G$ can be described in the following algorithm where steps are addressed simultaneously: for every sensor $i=1,\ldots,m$,\\
1.~Find the number of \mbox{left-unmatched} vertices in $\mc B(\bar{\mb A})$;\\
2.~Find the minimum cost linking $\mc P_i$;\\
3.~Run the minimum cost maximum matching algorithm on $\mc G\setminus \mc P_i$ and select the edges that compose it; and\\
4.~Add the minimum cost edges such that the graph $\mc G$ is strongly connected.

We can leverage some insights provided by the heuristic algorithm provided in~\cite{AlexandruCDC16} 
to obtain an integer programming problem formulation without explicitly computing $\mc P_i$. For a sensor $i$, add a `virtual output' (i.e., not part of the sensing technology but with the same role under this intermediate step) to each of its \mbox{in-neighbors}, according to $\mb I_m^{\mathbb N^-_i}$. Denote this set of vertices by $\mc S^i$. 
Next, let $\mc B(\tilde{\mb A}, \tilde{\mb C}_i, \mc S^i)=(\mc X\cup\mc Z,\mc X\cup\mc Z\cup \mc S^i,$ $\mc E_{\mc X\cup\mc Z,\mc X\cup\mc Z\cup\mc S^i})$ be the \mbox{state-output} bipartite graph of the system with virtual outputs and $\mc D(\tilde{\mb A}, \tilde{\mb C}_i, \mc S^i)$ the associated \mbox{\mbox{state-output}} digraph. 
We then expand the cost structure as follows: assign weights $\mb \Omega^i$ w.r.t.~sensor~$i$ to all the edges that are not in the communication graph $\mc G$ and $\mb \Omega ^i_e = 0$ for $e \in \mathcal{E}_{\mc X\cup \mc Z,\mc X \cup \mc Z}\cup\mathcal{E}_{\mc Z,\mc S^i}$. This setup ensures that the minimum cost maximum matching algorithm in the former \mbox{state-output} bipartite graph will return a matching that partitions the \mbox{state-output} digraph in \mbox{vertex-disjoint} paths and cycles, while incurring the minimum cost. Specifically, the paths will contain those described by~$\mc P_i$, and the rest of the digraph will be spanned by disjoint cycles. Furthermore, notice that since there are no edges from the communication digraph to the state digraph, there can be no cycle spanning both vertices in $\mc X$ and vertices in $\mc Z$.

To state the integer programming formulation, let us denote by $u_e$ the binary variable associated to the existence of edge $e\in\mc D(\tilde{\mb A}, \tilde{\mb C}_i, \mc S^i),i=1,\ldots,m$. For a set $S\subset \mc V$, the cutset $\delta^-(S)\subset \mc E$ represents a subset of edges with the start vertex in $S$ and the end vertex in $\mc V \setminus S$, for a given set of edges $\mc E$ and set of vertices $\mc V$.
Hence, the constraints are given by the matching problem on the \mbox{state-output} digraph and by the strong connectivity of the communication graph, which is imposed via rooted minimum spanning tree for each vertex. For brevity, denote 
$\mc E^i:= \mc E_{\mc Z,\mc Z\cup\mc S^i}$. Thus, we obtain the following 
formulation of \emph{Problem~2}:
\vspace{-0.1cm}
\begin{align*}
	\min\limits_{u_e}&~ \sum_{e\in \cup_{i=1}^m \mc E_{\mc X\cup\mc Z,\mc X\cup\mc Z\cup\mc S^i }}\mb \Omega_e u_e\\ \vspace{-0.1cm}
	s.t&~ \sum_{e\in \delta^-(v)} u_e \leq 1,~ v\in \mc Z,\\ 
	    &~ \sum_{e\in \delta^-(S)} u_e \geq 1, ~\forall~ \emptyset \subsetneq S\subsetneq \mc Z, v\notin S, ~\forall ~ v\in \mc Z,\\
	    &~u_e = 1,~\forall e \in \mc E_{\mc X,\mc X}\cup\mc E_{\mc X,\mc Z}~\text{and}~u_e = 0~\text{otherwise},\\
	&~ u_e \in\{ 0,1\},~\forall e\in \mc E^i,~i = 1,\ldots,m.
\end{align*}\vspace{-0.55cm}

\noindent where the edges in the minimum cost communication graph~$\mc G^\ast$ can be retrieved from $\mc U:= \{e\in \mc E_{\mc Z,\mc Z}| u_e = 1\}$.

The design problem proposed in~\emph{Problem~2} is NP-hard, since it contains as a particular instance the design problem addressed in~\cite{AlexandruCDC16}. 
Hence, a straightforward greedy algorithm can be implemented by sequentially performing the steps described in the pseudo-algorithm above. Nonetheless, the solution will depend on the initial point since, at each iteration, the previous selected edges in $\mc E$ will be set to one, 
which does not guarantee that the final configuration of $\mc G$ has indeed minimum cost. Consequently, we leverage the integer programming formulation which is also known to be NP-hard in general, but which we can solve by resorting to highly optimized off-the-shelf software toolboxes (e.g., YALMIP) that have been efficiently deployed in practice when dealing with large-scale complex problems~\cite{nemhauser1988integer,lofberg2004yalmip}. 

\begin{figure}
 \begin{centering}
\begin{tikzpicture}[>=stealth',shorten >=1pt,node distance=1cm,on grid,initial/.style    ={}]

  \node[state,scale=0.5,fill]          (x2)                        {\textcolor{white}{$x_2$}};
  \node[state,scale=0.5,fill]          (x1) [left =of x2]    {\textcolor{white}{$x_1$}};
  \node[state,scale=0.5,fill]          (x3) [right =of x2]    {\textcolor{white}{$x_3$}};
  \node[state,scale=0.5,fill]          (x4) [right =of x3]    {\textcolor{white}{$x_4$}};
  \node[state,scale=0.5,fill]          (x5) [right =of x4]    {\textcolor{white}{$x_5$}};
  \node[state,scale=0.5,fill=blue]          (y1) [above =of x1,yshift=0.3cm]    {\textcolor{white}{$y_1$}};
  \node[state,scale=0.5,fill=blue]          (y2) [above =of x2,yshift = 0.3cm]    {\textcolor{white}{$y_2$}};
  \node[state,scale=0.5,fill=blue]          (y3) [above =of x3,yshift=0.3cm]    {\textcolor{white}{$y_3$}};
  \node[state,scale=0.5,fill=blue]          (y4) [above =of x4,yshift = 0.3cm]    {\textcolor{white}{$y_4$}};
  \node[state,scale=0.5,fill=blue]          (y5) [above =of x5,yshift=0.3cm]    {\textcolor{white}{$y_5$}};
\tikzset{mystyle/.style={->,red}} 
\path (x1)     edge [mystyle]    node   {} (y1)
	(x2)     edge [mystyle]    node   {} (y2)
	(x3)     edge [mystyle]    node   {} (y3)
	(x4)     edge [mystyle]    node   {} (y4)	
	(x5)     edge [mystyle]    node   {} (y5);

 \path[->]          (x1)  edge   [bend right=20]   node {} (x2);
  \path[->]          (x3)  edge   [bend right=20]   node {} (x2);
  \path[->]          (x3)  edge   [bend right=20]   node {} (x4);
  \path[->]          (x4)  edge   [bend right=20]   node {} (x3);
  \path[->]          (x4)  edge   [bend right=20]   node {} (x5);
  \path[->]          (x5)  edge   [bend right=20]   node {} (x4);
\end{tikzpicture}
  \caption{Plant with 5 state nodes (black) and 5 sensors deployed (blue). The interconnections between the state nodes are depicted in black and the measurement terminals between state nodes and sensors are depicted in red.}
  \vspace{-0.69cm}
  \label{fig:5sensors}
  \end{centering}
\end{figure}
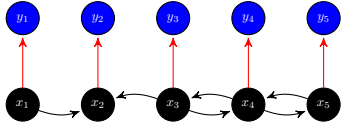

\section{Illustrative example}

Consider the example in Figure~\ref{fig:5sensors} and associate the following cost matrix for the communication links:\vspace{-0.1cm}
{\small \[\mb \Omega(\mc G) = \left[ \begin{smallmatrix}\infty & 1 & 1 & 1 & 1\\ 1 & \infty & 1 & 1 & 1\\ 1 & 1& \infty & 1 & 1 \\ 1 & 1 & 1 & \infty & 1 \\ 1 & 1 & 1 & 1 & \infty \end{smallmatrix}\right].\]}
\vspace{-0.2cm}

The minimum topology for the communication graph~$\mc G$ such that decentralized observability from all sensors is ensured is depicted in Figure~\ref{fig:5sensorsMin}. Since there are two left-unmatched vertices in $\mc B(\bar{\mb A})$, each sensor should have at least two in-neighbors. Here, we illustrate how decentralized observability from sensor 1 is achieved. 
We need to find a linking $\mc P_1$ such that $\mc G\setminus \mc P_1$ is spanned by a disjoint union of cycles. In this case, pick a maximum matching in $\mc B(\bar{\mb A},\bar{\mb C})$ composed of the edges $(x_1,y_1),(x_2,y_2),(x_3,x_2),(x_4,x_3),(x_5,x_4)$. The linking $\mc P_1$ from the matched sensors to the in-neighbors of sensor 1 can be chosen as $(y_1,y_4,y_5)\cup(y_2,y_3)$. We obtain that $\mc G\setminus \mc P_1=\emptyset$, which is trivially spanned by a union of disjoint cycles. 

\begin{figure}[b]
 \begin{centering}
   \vspace{-0.69cm}
\begin{tikzpicture}[>=stealth',shorten >=1pt,node distance=1cm,on grid,initial/.style    ={}]
  \node[state,scale=0.5,fill]          (x2)                        {\textcolor{white}{$x_2$}};
  \node[state,scale=0.5,fill]          (x1) [left =of x2]    {\textcolor{white}{$x_1$}};
  \node[state,scale=0.5,fill]          (x3) [right =of x2]    {\textcolor{white}{$x_3$}};
  \node[state,scale=0.5,fill]          (x4) [right =of x3]    {\textcolor{white}{$x_4$}};
  \node[state,scale=0.5,fill]          (x5) [right =of x4]    {\textcolor{white}{$x_5$}};
  \node[state,scale=0.5,fill=blue]          (y1) [above =of x1,yshift=0.3cm]    {\textcolor{white}{$y_1$}};
  \node[state,scale=0.5,fill=blue]          (y2) [above =of x2,yshift = 0.3cm]    {\textcolor{white}{$y_2$}};
  \node[state,scale=0.5,fill=blue]          (y3) [above =of x3,yshift=0.3cm]    {\textcolor{white}{$y_3$}};
  \node[state,scale=0.5,fill=blue]          (y4) [above =of x4,yshift = 0.3cm]    {\textcolor{white}{$y_4$}};
  \node[state,scale=0.5,fill=blue]          (y5) [above =of x5,yshift=0.3cm]    {\textcolor{white}{$y_5$}};
\tikzset{mystyle/.style={->,red}} 
\path (x1)     edge [mystyle]    node   {} (y1)
	(x2)     edge [mystyle]    node   {} (y2)
	(x3)     edge [mystyle]    node   {} (y3)
	(x4)     edge [mystyle]    node   {} (y4)	
	(x5)     edge [mystyle]    node   {} (y5);

 \path[->]          (x1)  edge   [bend right=20]   node {} (x2);
  \path[->]          (x3)  edge   [bend right=20]   node {} (x2);
  \path[->]          (x3)  edge   [bend right=20]   node {} (x4);
  \path[->]          (x4)  edge   [bend right=20]   node {} (x3);
  \path[->]          (x4)  edge   [bend right=20]   node {} (x5);
  \path[->]          (x5)  edge   [bend right=20]   node {} (x4);

  \path[->,blue]          (y1)  edge   [bend right=30]   node {} (y3);
  \path[->,blue]          (y1)  edge   [bend right=30]   node {} (y4);
  \path[->,blue]          (y2)  edge   [bend right=20]   node {} (y3);
  \path[->,blue]          (y3)  edge   [bend right=20]   node {} (y1);
  \path[->,blue]          (y3)  edge   [bend right=20]   node {} (y2);
  \path[->,blue]          (y3)  edge   [bend right=20]   node {} (y4);
  \path[->,blue]          (y3)  edge   [bend right=40]   node {} (y5);
  \path[->,blue]          (y4)  edge   [bend right=20]   node {} (y5); 
  \path[->,blue]          (y5)  edge   [bend right=30]   node {} (y1);
  \path[->,blue]          (y5)  edge   [bend right=30]   node {} (y2);
  \path[->,blue]          (y5)  edge   [bend right=20]   node {} (y4);
\end{tikzpicture}
  \caption{Minimum communication topology that ensure decentralized observability for the plant in Figure~\ref{fig:5sensors}. }
  \label{fig:5sensorsMin}
  \end{centering}
\end{figure}
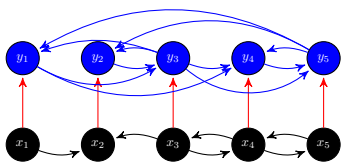
\vspace{-0.1cm}
\section{Conclusions}

In this paper, we have extended the limited communication decentralized estimation schemes to cope with general scenarios and provided necessary and sufficient conditions for the communication graph such that retrieval of the state of the system and sensors is possible. In particular, the present extension enables the deployment of limited communication decentralized estimation schemes in the scenarios where the sensors are memoryless, 
and where the sensors do not have the capacity to discern the contributions and/or state of other sensors. Furthermore, we cast the design problem under communication costs, i.e., the problem of determining the minimum cost communication graph required to implement a limited communication decentralized estimation scheme, as an integer programming problem. 
This formulation enables the use of off-the-shelf software toolboxes that are reliable in practice when dealing with large-scale complex problems.

Future research will focus on proposing energy-efficient communication protocols between the sensors when subject to constrained energy budgets. Towards this goal, it is important to understand the trade-offs between communication and the information contained in the sensors states. In particular, we aim to quantify and classify the role of the sensors' state dimension in the estimation process and accuracy, which can be key when adding communication infrastructure is prohibitive. Our findings suggest that as the number of dimensions of the exchanged states increases, fewer communication links are required to guarantee decentralized observability. Moreover, one can design the dimension of the sensors's memory such that no additional links are necessary. 



\footnotesize

\bibliographystyle{IEEEtran}
\bibliography{IEEEabrv,acc2016_2}

\end{document}